\newcommand{\cvc}{\textsc{cvc5}}
\newcommand{\definecount}[2]{\ensuremath{\mathsf{Count{#1}_{\downarrow {#2}}}}}
\newcommand{\ctp}{\definecount{\mathcal{T}}{\mathcal{P}}}
\newcommand{\cfplrabv}{\definecount{BVFPLRA}{BV}}
\newcommand{\cbvfp}{\definecount{BVFP}{BV}}
\newcommand{\mparagraph}[1]{\par \vspace{.1cm} \noindent \textit{#1}}
\newcommand{\est}[0]{\ensuremath{\mathsf{Est}}}
\newcommand{\vars}[1]{\ensuremath{\mathsf{Vars}(#1)}}
\newcommand{\sol}[0]{\ensuremath{\mathsf{Sol}}}
\newcommand{\smap}[0]{\ensuremath{\mathsf{SMTApproxMC}}}
\newcommand{\tool}[0]{\ensuremath{\mathsf{pact}}}
\newcommand{\pact}[0]{\tool}
\newcommand{\enum}[0]{\ensuremath{\mathsf{enum}}}
\newcommand{\saturatingcounter}[0]{\ensuremath{\mathsf{SaturatingCounter}}}
\newcommand{\nextIndex}[0]{\ensuremath{\mathsf{NextIndex}}}
\newcommand{\getconstants}[0]{\ensuremath{\mathsf{GetConstants}}}
\newcommand{\TT}{\ensuremath{\mathcal{T}}}
\newcommand{\PP}{\ensuremath{\mathcal{P}}}
\newcommand{\tot}[0]{3,119}
\newcommand{\mxb}[0]{83}
\newcommand{\mxp}[0]{456}
\newcommand{\mxpp}[0]{33}
\newcommand{\mxps}[0]{40}
\newcommand{\mxba}{64}
\newcommand{\sota}[0]{state-of-the-art}
\newcommand{\ope}{\ensuremath{1+\varepsilon}}
\newcommand{\solproj}[2]{\ensuremath{\mathsf{Sol}(#1)_{\downarrow{#2}}}}
\newcommand{\killthis}[1]{}
\newcommand{\HH}{\ensuremath{\mathcal{H}}}
\newcommand{\OO}[1]{\ensuremath{\mathcal{O}{(#1)}}}
\newcommand{\hshift}{\ensuremath{\mathcal{H}_{\textit{shift}}}}
\newcommand{\hxor}{\ensuremath{\mathcal{H}_{\textit{xor}}}}
\newcommand{\hprime}{\ensuremath{\mathcal{H}_{\textit{prime}}}}
\newcommand{\pshift}{\ensuremath{\pact_{\textit{shift}}}}
\newcommand{\pxor}{\ensuremath{\pact_{\textit{xor}}}}
\newcommand{\pprime}{\ensuremath{\pact_{\textit{prime}}}}
\newcommand{\CDM}{\ensuremath{\mathsf{CDM}}}
\newcommand{\ceil}[1]{\ensuremath{\lceil #1 \rceil}}
\newcommand{\floor}[1]{\ensuremath{\lfloor #1 \rfloor}}
\newcommand{\thresh}[0]{\ensuremath{\mathsf{thresh}}}
\newcommand{\itercount}[0]{\ensuremath{\mathsf{numIt}}}
\newcommand{\FindMedian}{\ensuremath{\mathsf{FindMedian}}}
\newcommand{\generatehash}[0]{\ensuremath{\mathsf{GenerateHash}}}
\newcommand{\GetCount}[0]{\ensuremath{\mathsf{GetCount}}}
\newcommand{\iterdone}[0]{\ensuremath{\mathsf{countFound}}}
\newcommand{\iters}[0]{\ensuremath{\mathsf{iters}}}
\newcommand{\ite}[0]{\ensuremath{\mathsf{it}}}
\newcommand{\append}[0]{\ensuremath{\mathsf{append}}}
\newcommand{\res}[0]{\ensuremath{\mathsf{res}}}
\newcommand{\CC}[0]{\ensuremath{\mathsf{C}}}
\newcommand{\LL}[0]{\ensuremath{\mathsf{L}}}
\newcommand{\SMTSolve}[0]{\ensuremath{\mathsf{SMTSolve}}}
\newcommand{\family}{\ensuremath{\mathsf{family}}}
\newcommand{\citet}[1]{\cite{#1}}
\newtheorem{theorem}{Theorem}
\newtheorem{definition}{Definition}
\newcommand{\FixHash}{\ensuremath{\mathsf{FixLastHash}}}
\newcommand{\Hp}[1]{{h^{(#1)}}}
\newcommand\ol\overline
\newcommand\fct\rightarrow
\title{Approximate SMT Counting \\ Beyond Discrete Domains}
\author{\IEEEauthorblockN{Arijit Shaw}
\IEEEauthorblockA{\textit{Chennai Mathematical Institute, India}\\
\textit{IAI, TCG CREST, Kolkata, India}}
\and
\IEEEauthorblockN{Kuldeep S. Meel}
\IEEEauthorblockA{\textit{Georgia Institute of Technology, USA}\\
\textit{University of Toronto, Canada}}
}
\begin{document}
\maketitle
\bstctlcite{IEEEexample:BSTcontrol}

\begin{abstract}
  Satisfiability Modulo Theory (SMT) solvers have advanced automated reasoning, solving complex formulas across discrete and continuous domains. Recent progress in propositional model counting motivates extending SMT capabilities toward model counting, especially for hybrid SMT formulas. Existing approaches, like bit-blasting, are limited to discrete variables, highlighting the challenge of counting solutions projected onto the discrete domain in hybrid formulas.

  We introduce {\tool}, an SMT model counter for hybrid formulas that uses hashing-based approximate model counting to estimate solutions with theoretical guarantees. {\tool} makes a logarithmic number of SMT solver calls relative to the projection variables, leveraging optimized hash functions.  {\tool} achieves significant performance improvements over baselines on a large suite of benchmarks. In particular, out of  {\tot} instances, {\pact} successfully finished on {\mxp} instances, while the current state-of-the-art counter could only finish on  {\mxb} instances.{\footnote{A preliminary version of this paper appears at the Design Automation Conference (DAC) 2025. }}

\end{abstract}

\section{Introduction}\label{sec:intro}

Propositional model counting is the task of counting the number of satisfying assignments for a given Boolean formula. Recent advances in model counters  have made them useful for solving a variety of real-world problems~\cite{CMV21,GSS21} such as  probabilistic inference~\cite{CD08}, software verification~\cite{TW21}, network reliability~\cite{DMPV17}, and neural network verification~\cite{BSSM+19}. The development of model counters has been motivated by the success of SAT solvers over the past few decades, which allowed researchers to explore problems beyond mere satisfiability.

Concurrently, the success of SAT solvers~\cite{FHIJ21} led to an interest in solving the satisfiability of formulas where the variables are not just Boolean. This interest gave rise to the field of Satisfiability Modulo Theories (SMT), which includes a range of theories such as arithmetic, bitvectors, and data structures~\cite{KS16}. SMT theories, inspired by application needs, offer more succinct problem representations than Boolean satisfiability. There has been a significant development in the design of SMT solvers in recent years~\cite{boolector,mathsat5,BSST21,cvc5,bitwuzla}. The compactness of SMT and recent solver advancements have made them useful in software and hardware verification~\cite{MMBD+18,HJ20}, security~\cite{BBBB+20}, test-case generation, synthesis, planning~\cite{CMZ20}, and optimization~\cite{SSA16}.

In light of the availability of powerful SMT solvers, a natural next challenge that has emerged is exploration of techniques for model counting for SMT formulas.
A significant effort has been devoted to problems where the underlying theory is discrete,  such as bitvectors~\cite{CMMV16,KM18}, linear integers ~\cite{GMMZ+19,GB21,G24}, and strings~\cite{ABB15}. Recent studies have shown that in these cases, reducing the problem to Boolean model counting is often the most effective approach~\cite{SM24}.  There are, however, several applications that require reasoning over  both continuous and discrete variables, which we refer to as {\em hybrid SMT formulas}.

The hybrid SMT formulas are defined over both discrete and continuous variables, and we are interested in solutions projected over discrete variables. Our investigations for the development of counting techniques for hybrid SMT formulas are motivated by their ability to model several interesting and relevant applications, such as robustness quantification of cyber-physical systems and counting reachable paths in software (for detailed discussion, see Section~\ref{sec:applications}). The most relevant work to our setting is that of Chistikov, Dimitrova, and Majumdar~\cite{CDM17}, who introduced the notion of \emph{measured theories}, which provides a unifying definition for quantitative problems across discrete and continuous SMT theories, as well as their combinations.  Building on this framework, they developed hashing-based approximation algorithms for model counting over hybrid SMT formulas. Their techniques, however, face scalability hurdles, as discussed in Section~\ref{sec:results} and consequently, the development of scalability techniques for hybrid SMT formulas has remained a formidable challenge.

The primary contribution of this work is an affirmative answer to the aforementioned challenge via the development of a model counting tool, {\tool}, for efficient projected counting of hybrid SMT formulas. The {\tool} approximates the model count with $(\varepsilon, \delta)$ guarantees. {\pact} supports SMT formulas with various theories, including linear and non-linear real numbers, floating-point arithmetic, arrays, bit-vectors, or any combination thereof. The projection variables are over bit-vectors. {\pact} employs a hashing-based approximate model counting technique, utilizing various hash functions such as multiply-mod-prime, multiply-shift, and XOR.
The algorithm makes $\mathcal{O}(\log(|S|))$ calls to the SMT oracle, where $S$ is the set of projection variables. We have implemented a user-friendly open-source tool based on {\cvc}.  {\tool} supports a diverse array of theories including  QF\_ABV, QF\_BVFP,    QF\_UFBV, QF\_ABVFPLRA,    QF\_ABVFP,       QF\_BVFPLRA.

To demonstrate runtime efficiency, we conduct an extensive empirical evaluation over {\tot} benchmark instances. On these instances, {\pact} successfully finished on {\mxp} instances, while the current state-of-the-art~\citet{CDM17} could finish only on   {\mxb} instances.

\subsection{Applications} \label{sec:applications}

We now discuss four motivating applications for counting over hybrid SMT formulas.

\mparagraph{Robustness Analysis of Automotive Cyber-Physical Systems.}
Evaluating robustness is crucial in automotive cyber-physical systems (CPS), especially with the rise of autonomous vehicles. Koley et al.~\citet{KDM+23} encoded the problem using SMT to identify potential CPS attack vectors, incorporating both discrete and continuous variables to represent cybernetic and physical aspects, respectively. This framework extends to a quantitative approach, where the problem becomes an SMT counting query. Robustness is assessed by counting potential attack points, with the projection set defined by the system’s input parameters.

\mparagraph{Reachability Analysis of Critical Software.} Consider a control-flow graph (CFG) of critical software, where we are interested in knowing how many different paths exist in that CFG, such that some violating conditions are reached. We can encode this problem as a counting problem on the SMT formula with discrete and continuous variables, and the projection set would contain Boolean variables indicating whether a node of CFG is reachable. The projected model count will give the number of satisfying paths in CFG. %

\mparagraph{Quantitative Software Verification.}
To ensure software reliability, identifying bugs is not always sufficient; a quantitative approach is vital for understanding their impact. A program with an assertion is converted into an SMT formula through a Single Static Assignment (SSA), revealing inputs that lead to assertion failures by counting these specific inputs. Teuber and Weigl\citet{TW21} reduced the quantitative verification to projected counting over hybrid SMT formulas, wherein the underlying theory is  {QF\_BVFP}.

\mparagraph{Quantification of Information Flow.}
In the domain of software reliability, the quantification of information flow represents a critical challenge, particularly in measuring information leakage within industrial software applications. Phan and Malacaria~\citet{PM14} showed that the problem of quantification of information flow in the case of standard programs can be reduced to the task of counting over hybrid SMT formulas defined over {QF\_BVFP}.

\subsection{Organization}
The rest of the paper is organized as follows: We introduce the preliminaries and related work in \cref{sec:background}. In \cref{sec:framework}, we present an overview of our framework, {\tool}. We describe our experimental methodology and results in \cref{sec:results}. Finally, we conclude in \cref{sec:concl}.

\section{Preliminaries}
\label{sec:background}

\mparagraph{Satisfiability Modulo Theory (SMT)}~\cite{KS16} combines Boolean satisfiability (SAT) with theories such as integer and real arithmetic, bit-vectors, arrays, enabling efficient and automated analysis of logical formulas involving various data types. \textit{SMT solvers} solves the satisfiability of an SMT formula.

\mparagraph{Hybrid SMT Formulas.} Some SMT theories are discrete (such as bitvectors and integers), while others are continuous (such as reals and floating points). 
We define a \textit{hybrid SMT formula} as an SMT formula that combines two or more theories, where there is at least one discrete theory and one continuous theory. For example, a formula in {QF\_BVLRA} is a hybrid formula because it contains both real variables (continuous) and bitvector variables (discrete). %

\mparagraph{Projection Set and Projected Solutions.}
Let $F$ represent an SMT formula, where $\vars{F}$ signifies the set of all variables of $F$. A \textit{projection set} $S$ is a subset of $\vars{F}$. Given an assignment $\tau$ to $\vars{F}$,  $T_{\downarrow S}$ denotes the projection of $\tau$ on $S$ 
$\sol(F)$ denotes the set of all solutions to the formula $F$. $\solproj{F}{S}$ represents the set of all solutions of $F$ projected on $S$. In the context of this paper, $S$ is a set of discrete variables and therefore, $\solproj{F}{S}$ is a finite set. 

\mparagraph{Model Counting.}
Given a formula $F$ and a projection set $S$, the problem of model counting is to compute $|\solproj{F}{S}|$.
An  \emph{approximate model counter} takes in a formula $F$, projection set $S$, tolerance parameter $\varepsilon$, and confidence parameter $\delta$, and returns $c$ such that $\Pr\left[\frac{|\solproj{F}{S}|}{1+\varepsilon} \leq c \leq (1+\varepsilon) |\solproj{F}{S}| \right] \geq 1-\delta$.

\mparagraph{Hash functions.}
A \textit{hash function} $h: U \to [m]$ maps elements from a universe $U$ to a range $[m] = \{0, 1, \dots, m-1\}$. A \textit{pairwise independent hash function} is a hash function chosen from a \textit{family} $\mathcal{H}$ of functions $h: U \to [m]$ such that, for any two distinct elements $x_1, x_2 \in U$ and for any $i_1, i_2 \in [m]$: $\Pr[h(x_1) = i_1 \land h(x_2) = i_2] = \sfrac{1}{m^2}.$
A \textit{vector hash function} $h: U^d \to [m]$ extends this concept by mapping $d$-dimensional vectors of $w$-bit integers to the range $[m]$.
In this paper, the term \textit{hash function} refers to \textit{hash-based constraint}, represented as $h(\mathbf{x}) = \alpha$. The solutions of the formula $F \land (h(\mathbf{x}) = \alpha)$ form a subset of the solutions to $F$, restricted to those where the hash function maps to $\alpha$.

\subsection{Problem Statement}
We shall introduce the projected counting problem, defined by the specific theory of the formula and the projection variables.

\begin{definition}[\ctp(F,S)]
  Given a logical formula $F$ defined over the SMT theory $\mathcal{T} \cup \mathcal{P}$; and a projection set $S$ on theory $\mathcal{P}$; where $\mathcal{T}$ is either discrete or continuous or combination of both, and $\mathcal{P}$ is a discrete theory, {\ctp$\mathsf{(F,S)}$} refers to the problem of counting $|\solproj{F}{S}|$.
\end{definition}
In this work, we consider $\mathsf{BV}$ as $\mathcal{P}$. Any possible theory or combination of theories can serve as $\mathcal{T}$. Consequently, the resulting counting problems take forms such as $\cfplrabv$, $\cbvfp$, and similar variations.

\subsection{Related Work}

The success of propositional model counters, particularly approximate model counters, prompted efforts to extend the techniques to word-level constraints.
The most relevant work in this direction is due to Chistikov, Dimitrova, and Majumdar~\cite{CDM17}. Their algorithm supports projected counting over hybrid domains by treating the continuous variables as existentially quantified. They reduce model counting to a sequence of SMT queries. To obtain an approximation with a desired precision, these SMT queries contain multiple copies of the original SMT formula, and the hashing constraints are applied to the duplicated free variables. %

In the following years, Chakraborty et al.\citet{CMMV16} designed {\smap} for counting over theory of bit-vector by lifting the hash functions for word-level constraints. Kim and McCamant\citet{KM18} designed a system to estimate model count of bitvector formulas.  Ge et al.~\citet{GMLT18} developed a probabilistic polynomial-time model counting algorithm for bit-vector problems, and also developed a series of algorithms in the context of related SMT theories to compute or estimate~\cite{GB21,GMZ18,GMMZ+19} the number of solutions for linear integer arithmetic constraints. %

A closely related problem in the hybrid domain of Boolean and rational variables is \textit{Weighted Model Integration} (WMI)~\citet{BPdB15}, which involves computing the volume given the weight density over the entire domain. Extensive research addresses WMI through techniques such as predicate abstraction and All-SMT~\citet{MPS17,MPS19}, as well as methods leveraging knowledge compilation~\citet{KMSB+18}.

Hashing-based approximate model counting has been extensively studied over the past decades~\cite{AHT18,AT17,BdBP15,CMMV16,CMV13,CMV16,GSS06,SM19,S83,YM23,ZCSE16}. Chakraborty et al.\cite{CMV21} showed that variations in a few key components in a generalized framework account for the diversity in prior approaches. While prior works focused on discrete domains such as Boolean variables\cite{CMV13,CMV16,YM23} and bitvectors~\cite{CMMV16,CDM17}, our approach extends this framework to hybrid SMT formulas.

\section{Algorithm and Implementation}
\label{sec:framework}

We introduce {\tool}, our tool for approximate counting of SMT formulas. It processes a formula $F$, a set of projection variables $S$, a \textit{tolerance} $\varepsilon$, and a \textit{confidence} $\delta$ to produce an approximation of $|\solproj{F}{S}|$ within the desired tolerance and confidence. The main idea behind {\tool} involves dividing the solution space into equally sized \textit{cells} using hash functions and then enumerating the solutions within each \textit{cell}.

\begin{algorithm}[!htbp]
  \caption{{\tool}($F, S, \varepsilon, \delta, \family$)}
  \label{alg:pact}
  \begin{algorithmic}[1]
    \State $\LL \leftarrow\varnothing, \ite \gets 0$\label{algline:init}

    \State $\thresh, \itercount, \ell \gets  \getconstants (\varepsilon, \delta, \family) $\label{algline:thresh}

    \State $\CC[0] \gets \Call{\saturatingcounter}{F,S, \thresh}$\label{algline:saturating1}
    \If{$\CC[0] \neq \mathsf{T}$} \Return $\CC[0]$ \label{algline:checksaturating1}
    \EndIf

    \While{$\ite < \itercount$}\label{algline:repeat}
    \State $\CC \gets \varnothing, \iterdone \gets \bot, i \gets 0$
    \State $H \gets \generatehash (S, \ell, \family)$ \label{algline:genhash}
    \While {$\iterdone = \bot$} \label{algline:innerloopstart}
    \State $i \gets \nextIndex(\CC, i)$\label{algline:nextindex}
    \State $\CC[i] \gets \Call{\saturatingcounter}{F \land H_{[i]}, P, \thresh}$\label{algline:saturating}
    \If{$\CC[i] < \thresh \land \CC[i-1] = \top$}\label{algline:saturatingif}
    \State $\CC', H' \gets \FixHash(F,S, \CC , H,i, \ell)$ \label{algline:fixhash}
    \State $\LL.\append (\GetCount(\CC'[i],H'))$ \label{algline:append}
    \State $\iterdone \gets \top, \ite$++ \label{algline:loopend}
    \EndIf
    \EndWhile
    \EndWhile
    \State \Return $\FindMedian(\LL)$
  \end{algorithmic}
\end{algorithm}

\Cref{alg:pact} presents the main algorithm {\pact}. The algorithm starts by setting constants of the algorithm, the value for {\thresh} and {\itercount} from the values of $\varepsilon$ and $\delta$, depending on the hash family being used. The constants arise from technical calculations in the correctness proof of the algorithm, and the values are shown {\getconstants} subroutine (\Cref{alg:getconstants}). The value of {\thresh} determines the maximum size of a \textit{cell}. A cell is considered \textit{small} if it has a number of solutions less than this threshold. The value of {\itercount} determines how many times the main loop of the program (lines \ref{algline:repeat} - \ref{algline:loopend})  is repeated. In each iteration of the main loop, an approximate count is generated, which is stored in the list {\LL}. While each of the approximate counts might fail to provide an estimate with the desired $\delta$, the median of the counts of this list gives the approximation of model count with $(\varepsilon,\delta)$ guarantees.

The main loop of the algorithm begins with the subroutine {\generatehash}, which produces a list of hash functions $H$ selected from one of the families $\hshift$, $\hprime$, or $\hxor$ to be used during the current iteration.
In each iteration, {\pact} maintains a list $\CC$ of numbers, where the element $\CC[i]$ represents the size of a \textit{cell} after applying the first $i$ hash functions from $H$, denoted as $H_{[i]}$. The subroutine {\nextIndex}, called in line~\ref{algline:nextindex}, uses a galloping search to identify an index $i$ in $\CC$ where the value of $\CC[i]$ has been computed. The parameter $\ell$ in {\generatehash} determines the range of hash functions generated. Specifically, (i) for {\hshift}, the range of hash functions is set to $2^\ell$, and (ii) for {\hprime}, {\generatehash} constructs hash functions of a range of the smallest prime larger than $2^\ell$.

Following that, in line~\ref{algline:saturating}-\ref{algline:saturatingif}, {\pact} checks by a call to {\saturatingcounter} whether $|\solproj{F \land H_{[i]}}{S}| < \thresh$ and $|\solproj{F \land H_{[i]}}{S}| \geq \thresh$. With this condition, {\pact} is enabled a rough estimate of the model count, but to get the estimate within desired error bounds, {\pact}  uses the {\FixHash} subroutine in line~\ref{algline:fixhash}.
This subroutine eliminates the last hash, $H[i]$, and introduces a new hash function that reduces the number of solution partitions.
The subroutine iterates the procedure to find two hash functions $h'$ and $h''$, such that $h''$ divides the space into $ \sfrac{k}{2} $ parts, while $h'$ divides into $k$ parts. Now if  $|\solproj{F \land H_{[i-1]} \land h'}{S}|  < thresh, |\solproj{F \land H_{[i-1]} \land h''}{S}| \geq \thresh$, {\FixHash} returns $ H' = H_{[i-1]} \land h'$ and $C' = |\solproj{F \land H_{[i-1]} \land h'}{S}|$.
However, when {\pact} uses {\hxor}, the call to {\FixHash} is unnecessary, as it already partitions the solution space into two parts using one hash function.

The subroutine {\GetCount} approximates $\solproj{F}{S}$ by multiplying $C'$ with the number of partitions generated by all the hashes used in $H'$. This number is then appended to the list {\LL}, and {\pact} continues to the next iteration of the main loop. Once the main loop generates count for {\itercount} times, we take the median of all the counts, which is an approximation for $\solproj{F}{S}$ with desired guarantees.

\begin{algorithm}[htb]
  \caption{$\FixHash(F,S,\CC, H,i, \ell)$}
  \label{alg:getstatus}
  \begin{algorithmic}[1]
    \If{\family = \hxor}
    \Return $\CC, H$
    \EndIf
    \While{$\ell > 1$}
    \State   $\ell \gets \floor{\sfrac{\ell}{2}}$
    \State $\Hp{\ell} \gets \generatehash (S, \ell, \family)$ \label{algline:genhash}
    \State $c \gets \Call{\saturatingcounter}{F \land H_{[i-1]} \land \Hp{\ell}, S, \thresh}$
    \If{$c \neq \top$}
    $\CC[i] = c, H[i] = h^{(\ell)}$
    \Else{}
    \Return $\CC,H$
    \EndIf
    \EndWhile
    \State \Return $\bot$
  \end{algorithmic}
\end{algorithm}

\begin{algorithm}
  \caption{\getconstants($\varepsilon$, $\delta$, $\family$)}
  \label{alg:getconstants}
  \begin{algorithmic}[1]
    \State $\thresh \gets  1+9.84 \left( 1 + \frac{\varepsilon}{1+\varepsilon} \right) \left( 1 + \frac{1}{\varepsilon} \right)^2 $\label{algline:thresh}
    \If{$\family = \hxor$}
    $\iters \gets \lceil 17 \log \frac{3}{\delta} \rceil, \ell \gets 1$
    \Else{}
    $\iters \gets \lceil 23 \log \frac{3}{\delta} \rceil, \ell \gets 4$
    \EndIf
    \State \Return $\thresh, \iters, \ell$
  \end{algorithmic}
\end{algorithm}

\subsection{Hash functions.}
The choice of hash function is one of the most important parts in a hashing-based model counter like {\pact}.
In {\pact}, we experiment with three different pairwise independent vector hash functions, which have been used in different literature.
\begin{itemize}
  \item \textit{Multiply mod prime} (\hprime)~\cite{T19}: For a prime number $p$, and independent random values $\mathbf{a} = a_0,\ldots,a_{d-1},b \in [p]$,  the hash function from $[p]^d$ to $[p]$ is given by:
        $$h_{\mathbf{a},b}(\mathbf{x})\equiv\left(\sum_{i\in [d]} a_i x_{i}  +b\right)\bmod p = \alpha$$

  \item \textit{Multiply-shift (\hshift)} ~\cite{D96}: For independent random values $\mathbf{a}= a_0,\ldots,a_{d-1},b \in [2^{\overline{w}}]$, where $\ol{w} \geq w + \ell -1$ the hash function from $[2^w]^d$ to $[2^\ell]$ is given by:
        $$ h_{\mathbf{a},b}(\mathbf{x})\equiv     \left(\sum_{i\in [d]}a_ix_i+b\right)[\overline{w}-\ell,\overline{w})= \alpha $$

  \item \textit{Bitwise XOR (\hxor)}~\cite{CW77}: This is a particular case of the multiply mod prime scheme when $p = 2$.
        $$ h_{\mathbf{a}}(\mathbf{x}) \equiv  \bigoplus_{\{i | a_i = 1\}} x_{i} = \alpha$$

\end{itemize}

Given the type of hash function to generate, the procedure {\generatehash} generates the constraints of the form $h_{\mathbf{a},b} = \alpha$, where $\alpha$ is a randomly chosen element from the domain of the hash function.
When this hash function-based constraint $H$ is added to the input formula $F$, $F\land H$ satisfies only those solutions of $H$ for which the hashed value by $H_{\mathbf{a},b}$ is $\alpha$.
therefore, the number of solutions of $F \land H$ is approximately $p^{th}$ fraction of number of solution of $F$.
  {\hprime} and {\hshift} are word-level hash functions that allow us to choose a number of partitions we want to divide our solution space into. When {\pact} uses these hash functions, along with the projection set $S$, and the {\family}, {\generatehash} takes a parameter $\ell$, and generates a hash function with domain size $p$, such that $2^\ell\leq p <2^{\ell+1}$. Specifically, in {\hshift}, $p = 2^\ell$, in {\hprime}, $p$ is the smallest prime $ >2^\ell$.
In case of {\hxor}, $\ell =1, p = 2$.
As $S$ is evident from the context, and $\mathbf{a},b$ are randomly generated, we use the notation $h$ instead of $h(S)$ to denote the hash functions.

While each of the hashing constraints partitions the solution space into $p$ slices, we often want to partition it into more.
To partition the solution space into $p^c$ cells, we use the Cartesian product of $c$ hash functions: $\HH \times \HH \times \dots \times \HH$. We use the notation $H_{[i]}$ to denote the Cartesian product of the first $i+1$ hash functions, i.e., $H_{[i]} = h_0 \times h_1 \times \dots \times h_i$.

\mparagraph{Slicing.}
In {\generatehash} subroutine of {\pact}, the hash functions have particular domain sizes. But, the bitvectors can have arbitrary width; we \textit{slice} them into bitvectors of smaller width so that the value of (sliced) bitvector lies within the domain of the hash function.
Instead of defining hash functions on the variables from $S$, we define hash functions on \textit{slices} of the variables, defined as follows:
For a bitvector $\mathbf{x}$ of width $w$, we define $\ceil{w/\ell}$ slices of width $\ell$:
$x(\ceil{w/\ell}-1), \dots, x(1), x(0)$, where $x(i) = x\left[{(i+1)\ell}-1 : {i\ell}\right]$.

\subsection{Enumerating in a cell.} Another crucial step in {\pact} is to determine when the size of a cell is less than the threshold. {\pact} uses the  {\saturatingcounter} subroutine to enumerate solutions of the formula $F\land H_{[i]}$ to determine the size. An SMT solver is employed to find a solution $\res$ for the given formula $F$. The projection of this solution, $\res_{\downarrow S}$, is then \textit{blocked} by adding a constraint $\neg (\res_{\downarrow S})$. Subsequently, the solver is asked to find another solution. This process is repeated in a loop until the solver finds {\thresh} many solutions or reports UNSAT.

\subsection{Searching for the number of partitions.} The number of index of {\CC} goes upto \OO{|S|}. The task for {\nextIndex} subroutine is to find the index, for which the cell size is in the range $[1,\thresh]$. The {\nextIndex} finds the index by \OO{\log(|S|)} many calls by employing a galloping search method.%

\vspace{.5cm}

We defer the detailed descriptions of {\generatehash}, {\nextIndex}, {\saturatingcounter},  {\GetCount} and to the technical report of the paper.

\subsection{Analysis}

\begin{theorem}
  Let $F$ be a formula defined over a set of variables $V$ and discrete projection variables $S$. Let $\est = \tool(F,S,\varepsilon,\delta)$ be the approximation returned by {\tool}, and $c = |\solproj{F}{S}|$. Then, $$\Pr \left[\frac{c}{\ope} \leq \est \leq (\ope) c \right] \geq 1- \delta $$
  Moreover, {\tool} makes $\mathcal{O}(\log (|S|)\frac{1}{\varepsilon^2}\log( \frac{1}{\delta})  )$ many calls to an SMT solver.
\end{theorem}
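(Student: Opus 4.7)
The plan is to decompose the analysis into two parts: correctness of a single iteration of the main loop (lines 5--16), followed by a median-amplification argument for the outer repetition. The single-iteration analysis follows the classical template of hashing-based approximate counting, adapted here to the three vector hash families \hprime, \hshift, and \hxor. Concretely, for a fixed iteration let $\mu_i = c / |\mathrm{range}(H_{[i]})|$ denote the expected cell size after applying the first $i$ hash functions. Because each of the three families is pairwise independent, the cell-size random variable $N_i = |\solproj{F \land H_{[i]}}{S}|$ has variance at most its mean, and Chebyshev's inequality yields, for any slack $\beta > 0$,
$$\Pr\!\left[\,\bigl|N_i - \mu_i\bigr| > \beta\,\mu_i\,\right] \le \frac{1}{\beta^2 \mu_i}.$$
The threshold $\thresh = 1 + 9.84\,(1+\varepsilon/(1+\varepsilon))(1+1/\varepsilon)^2$ set in \getconstants is tuned so that, plugging in $\beta = \varepsilon/(1+\varepsilon)$, the right-hand side is a small constant for every $i$ with $\mu_i \ge \thresh/(1+\varepsilon)$.

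Next I would show that the pivot index $i$ selected by the inner loop of {\pact} (through \nextIndex, \saturatingcounter, and \FixHash) lies, with constant probability, in the ``good'' regime $\mu_i \in [\thresh/(1+\varepsilon)^2,\ \thresh]$; multiplying the returned cell size by $|\mathrm{range}(H_{[i]})|$ inside \GetCount then produces a $(1+\varepsilon)$-approximation of $c$. For \hxor this is immediate because each additional hash doubles the number of partitions and the selected index is the first one whose cell size drops below \thresh. For \hprime and \hshift, each hash cuts the range only by a factor in $[2^\ell, 2^{\ell+1})$ rather than exactly $2$; the \FixHash subroutine halves $\ell$ within the last hash until the cell size again crosses \thresh, yielding finer resolution. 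A union bound over the $\mathcal{O}(\log|S|)$ candidate indices scanned by the galloping search keeps the single-iteration failure probability below $1/3$.

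The main obstacle will be the \FixHash step for the prime-based family: since primes do not nest as powers of two do, one must argue carefully that the halving procedure preserves pairwise independence of the replacement hash function and that the resulting cell-size concentration still holds within the tight $(1+\varepsilon)$ window rather than only within a constant factor. The template from \smap in \cite{CMMV16} largely carries over, but the slicing of bit-vectors (as described in the Slicing paragraph) requires re-verifying that pairwise independence is preserved once each $x \in S$ is replaced by its $\ceil{w/\ell}$ fixed-width slices; this is the one place where the hybrid-domain setting really diverges from the pure bit-vector analysis.

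Finally, given a per-iteration success probability at least $2/3$, running \itercount $= \Theta(\log(1/\delta))$ independent outer iterations and returning the median boosts the success probability to at least $1-\delta$ by a Chernoff bound on the number of successful iterations, which is precisely why the constants in \getconstants take the form $17\log(3/\delta)$ and $23\log(3/\delta)$. For the SMT-call count: the outer loop executes $\mathcal{O}(\log(1/\delta))$ times; each iteration probes $\mathcal{O}(\log|S|)$ indices through \nextIndex and \FixHash; and each probe invokes \saturatingcounter, which makes up to $\thresh = \mathcal{O}(1/\varepsilon^2)$ SMT queries. Multiplying gives the stated $\mathcal{O}(\log(|S|)\cdot \tfrac{1}{\varepsilon^2}\cdot \log(1/\delta))$ bound.
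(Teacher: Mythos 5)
Your overall route is the same one the paper intends (its proof is deferred to the technical report, but the constants $\thresh = 1+9.84(1+\tfrac{\varepsilon}{1+\varepsilon})(1+\tfrac{1}{\varepsilon})^2$, the iteration counts $17\log\tfrac{3}{\delta}$ and $23\log\tfrac{3}{\delta}$, and the event notation set up in the source all point to the standard ApproxMC-style analysis): pairwise independence plus Chebyshev for cell-size concentration at the pivot level, a correction step for the non-dyadic ranges of \hprime{} and \hshift{} via \FixHash, median amplification over $\itercount$ independent repetitions, and the query count obtained as (outer repetitions) $\times$ (levels probed) $\times$ ($\thresh$ enumeration calls per level). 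Your identification of where the hybrid setting genuinely needs new checking --- that slicing wide bit-vectors and the \FixHash{} halving for the prime family must preserve pairwise independence --- is also the right place to focus.

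There is, however, one step that would fail as written: "a union bound over the $\mathcal{O}(\log|S|)$ candidate indices scanned by the galloping search keeps the single-iteration failure probability below $1/3$." The Chebyshev bound you invoke gives only a \emph{constant} failure probability at the critical level (with $\thresh \approx 9.84(1+1/\varepsilon)^2$ and $\beta=\varepsilon/(1+\varepsilon)$ it is on the order of $1/10$), so a naive union bound over $\log|S|$ levels exceeds $1$ once $|S|$ is even moderately large. The missing ingredient is the nestedness of the cells: since $H_{[i+1]}$ adds a constraint to $H_{[i]}$, the counts $|\solproj{F\land H_{[i]}}{S}|$ are monotone in $i$, so the failure event of the iteration is contained in a union of bad events at a \emph{constant} number of levels around the ideal index $m^*$ (the "too small too early" events at levels below $m^*$ have probabilities that decay geometrically in the distance from $m^*$, because $\mu_i$ grows geometrically there). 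This is exactly the $T/L/U$-event decomposition at $m^*$ that the paper's notation anticipates, and without it the per-iteration bound of $1/3$ --- and hence the premise of your median-Chernoff step --- is not established. The rest of the argument, including the $\mathcal{O}(\log(|S|)\tfrac{1}{\varepsilon^2}\log\tfrac{1}{\delta})$ accounting of SMT calls, goes through once this is repaired.
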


\begin{proof}
  We defer the proof to the accompanying technical report.
\end{proof}

\subsection{Impact of Hash Function Families}
Our empirical evaluation indicates {\saturatingcounter} is computationally the most expensive subroutine during execution of {\pact}. Recall that  $\saturatingcounter$ is invoked over the formula instance $F\land H_{[i]}$; naturally, the choice of the hash function $\family$ impacts the practical difficulty of the instance  $F\land H_{[i]}$. Below, we highlight the tradeoffs offered by different hash function families along many dimensions.
\begin{itemize}

  \item \textit{Bit-level vs. Bitvector Operations:} The hash function {\hxor} operates at the bit level, whereas {\hshift} and {\hprime} function on bitvectors, making the latter two more amenable to SMT reasoning. This fundamental difference in operation also enables {\hprime} and {\hshift} to vary the hash function's domain sizes, a flexibility not available with \hxor.

  \item \textit{Number of hash functions:} As each hash function cannot partition the solution space by more than two partitions, {\hxor} requires more number of constraints for counting an instance compared to {\hshift} and \hprime. A number of constraints generally make the problem harder for the solver to solve efficiently.

  \item \textit{Complexity of required operations:} Multiplication and modulus operations impose significant computational overhead on SMT solvers, making {\hprime} and {\hshift} more complicated than {\hxor}. Moreover, when {\pact} uses {\hxor}, it leverages the native XOR reasoning capability of $\mathsf{CryptoMiniSat}$ SAT solver inside {\pact}, further increasing the counter's performance.

  \item \textit{Requirement of bitwidth:} To represent the value of $\sum a_i x_i$, a bitvector of width $2w$ is required in {\hshift}, whereas a bitwidth of $2w + d$ is needed in {\hprime}, as the hash value in {\hshift} is calculated modulo $2^{2w}$, while in {\hprime} is computed modulo a prime. Since SMT solver performance degrades quickly with increasing bitwidth, {\hshift} is a more favorable choice in this context. In an alternative implementation of {\hprime}, each term $a_i x_i$ could be represented modulo a prime, but this would require an additional $d$ modulo operations, increasing complexity.
\end{itemize}

\subsection{Implementation and Supported Theories}
We implement {\tool} on top of {\cvc}, a modern SAT solver. {\pact} uses {\cvc} for parsing the formula and running the {\SMTSolve} procedure. In the problem of {\ctp},  {\pact} solves all of the theories and theory combinations in SMT-Lib for {\TT} and theories of bit-vectors for {\PP}. To enhance efficiency, we utilize the SMT solver in its incremental mode, allowing each subsequent query to leverage the information gained from previous calls. Similar incremental calls are different calls to {\saturatingcounter} at different iterations of {\tool}.

\begin{table}[t]
  \caption{Number of instances counted. (Projection on BV variables.)}%
  \centering
\begin{tabular}{llccc}
\toprule
Logic & {\CDM} & {\pprime} & {\pshift} & {\pxor} \\
\midrule
QF\_ABVFPLRA (19) & $-$ & $-$ & $-$ & $1$ \\
QF\_ABVFP (13) & $-$ & $1$ & $1$ & $7$ \\
QF\_ABV (2713) & $11$ & $-$ & $-$ & $284$ \\
QF\_BVFPLRA (46) & $-$ & $-$ & $-$ & $30$ \\
QF\_BVFP (129) & $71$ & $23$ & $37$ & $117$ \\
QF\_UFBV (199) & $1$ & $9$ & $2$ & $17$ \\ \midrule
Total (3119) & $83$ & $33$ & $40$ & $456$ \\
\bottomrule
\end{tabular}

  \label{tab:solved}
\end{table}

\section{Experimental Evaluation}
\label{sec:results}

We implemented{\footnote{The source code is available at  \url{https://github.com/meelgroup/pact}.  }} the proposed algorithm on top of {\sota} SMT solver {\cvc}. We used the following experimental setup  in the evaluation:

\mparagraph{Baseline.} We compared the performance of {\pact} with the current state-of-the-art tool by  Chistikov, Dimitrova, and Majumdar~\citet{CDM17}. We refer to the tool as {\CDM}, after the initials of its authors.

\mparagraph{Benchmarks.}
Our benchmark suite comprises {\tot} instances from the SMT-Lib 2023 release. We adopted a benchmark selection methodology inspired by early works on propositional model counting. We initially selected all instances supported by six theories. Subsequently, we filtered out instances where the number of solutions was very small (less than 500 models) or where  even satisfiability was computationally challenging, as determined by {\cvc}'s inability to find a satisfying assignment within 5 seconds. To avoid overweighting multiple near-identical benchmarks, we also sample to at most five benchmarks per \textit{cluster} (benchmarks whose file names and  differ only in index-level details).

\mparagraph{Environment.} We conducted all our experiments on a high-performance computer cluster, with each node consisting of Intel Xeon Gold 6148 CPUs. We allocated one CPU core and an 8GB  memory limit to each solver instance pair. To adhere to the standard timeout used in model counting competitions, we set the timeout for all experiments to 3600 seconds. We use values of $\varepsilon=0.8$ and $\delta=0.2$, in line with prior work in the model counting community.

\noindent
We conduct extensive experiments to understand the following:
\begin{enumerate}[font=\bfseries RQ, leftmargin=\widthof{[RQdd]}+\labelsep]
  \item How does the runtime performance of {\tool} compare to that of {\CDM}, and how does the performance vary with different hash function families?
  \item How accurate is the count computed by {\tool} in comparison to the exact count?

\end{enumerate}

\mparagraph{Summary of Results.}
{\tool} solves a significant number of instances from the benchmarks. It counted {\mxp} instances, while the current state-of-the-art counted only {\mxb} instances. Among different hash families, {\pact} performs the best while it uses {\hxor} hash functions. The accuracy of {\tool} is also noteworthy; the average approximation error is 3.3\% while using {\hxor} hashes\footnote{The benchmarks and logfiles are available at \url{https://doi.org/10.5281/zenodo.16413909}}.

\begin{figure}[!tb]
  \centering
  \resizebox{.95\linewidth}{!}{\input{figures/cactus.pgf}}
  \caption{Cactus plot comparing performances of {\tool} and {\CDM}.}
  \label{fig:cactus}
\end{figure}

\subsection{Performance of {\tool}}

\label{subsec:performance}
We evaluate the performance of {\tool} based on two metrics: the number of instances solved and the time taken to solve those instances. To differentiate {\pact} utilizing different hash function families, we use the notations {\pprime}, {\pshift}, and {\pxor}.

\mparagraph{Instances solved.} For each of the logic, we look at the number of instances solved. Out of {\tot} instances,  {\CDM} could solve only {\mxb} instances. Conversely, {\pxor} could solve {\mxp} instances,  demonstrating a substantial improvement compared to {\CDM}. The performance varies across different logics, which we represent in \Cref{tab:solved}. The number of instances solved is $14.6\%$ of the total number of instances, which is expected, given the target problem for these instances is satisfiability, and {\pact} solves a more complex problem.

\mparagraph{Comparison of Hash Functions.}
In the \Cref{tab:solved}, we compare the performance of {\tool} when it utilizes different hash functions for partitioning the solution space. The best performance is shown by {\pxor}, which solved {\mxp} instances. The performances of {\pprime} and {\pshift} are similar,  solving {\mxpp} and {\mxps} instances.%

\mparagraph{Solving time comparison.} A performance evaluation of {\CDM} and {\tool} is depicted in \Cref{fig:cactus}, which is a cactus plot comparing the solving time. The $x$-axis represents the number of instances, while the $y$-axis shows the time taken. A point $(i, j)$ in the plot represents that a solver solved $j$ benchmarks out of the {\tot} benchmarks in the test suite in less than or equal to $j$ seconds. The different curves plot the performance of {\CDM} and {\pact} with different hash functions.

\begin{figure}[!tb]
  \centering
  \resizebox{0.83\linewidth}{!}{\input{figures/correctness.pgf}}
  \caption{Accuracy check: observed error in {\pact} vs. the theoretical bound.}
  \label{fig:count-comp}
\end{figure}

\subsection{Quality of Approximation}
As none of the existing tool returns the exact count, to get the exact number of solutions, we developed an enumeration-based counter, referred to as {\enum}. Here, we use the state-of-the-art  SMT solver, {\cvc}, the same solver employed by {\pact}.
From our benchmark set, only 9 instances were solved by {\enum}. To increase the number of instances for which we know the exact count, we also include the benchmarks with model counts between 100 and 500 in this section of the paper - resulting in {\mxba} instances. We quantify the quality of approximation with the parameter error $e = \max\left(\frac{b}{s},\frac{s}{b}\right) -1$, where $b$ is the count from {\enum} and $s$ from {\tool}. this definition of error aligns with the $\varepsilon$ used in the algorithm's theoretical guarantees and can be interpreted as the observed value of $\varepsilon$. Analysis of all {\mxba} cases found that for {\pxor}, the maximum $e$ to be $0.26$ and the average to be $0.03$,   signifying {\tool} substantially outperforms its theoretical bounds, which is 0.8. In \Cref{fig:count-comp}, we illustrate the quality of approximation for these instances. The $x$-axis lists the instances, while the $y$-axis displays the relative error exhibited by a configuration of {\pact}. A dot $(x,y)$ in the graph indicates $x^{th}$ instance showed a relative error of $y$. The graph indicates that, for most instances, the error lies below 0.2, with a few instances falling between 0.2 and 0.8. The error for {\pshift} and {\pprime} is relatively higher than {\pxor}, with average error being $0.07$ and $0.12$ and maximum error being $0.39$ and $0.48$.
Our findings underline {\tool}'s accuracy and potential as a dependable tool for various applications.

\section{Conclusion and Future Work}
\label{sec:concl}
In this work, we introduced {\pact}, a projected model counter designed for hybrid SMT formulas. Motivated by the diverse applications of model counting and the role of hashing, we explored the impact of various hash functions, examining both bit-level and bitvector-level approaches. Our empirical evaluation demonstrates that {\pact} achieves strong performance on a broad application benchmark set. A dedicated XOR reasoning engine significantly enhanced {\pact}'s performance, suggesting that further development of specialized reasoning engines for bit vector-level hash functions could be a promising research direction. Additionally, {\pact}'s theoretical framework supports the SMT theory of integers (with specified bounds) as projection variables; this feature is not yet implemented and remains an avenue for future work.

\section*{Acknowledgement}
We are thankful to Jaroslav Bendik, Supratik Chakraborty, Dmitry Chistikov, Rayna Dimitrova, Ashwin Karthikeyan, Aina Niemetz, Mathias Preiner, Uddalok Sarkar, Mate Soos and Jiong Yang for the many useful discussions.  This work was supported in part by the Natural Sciences and Engineering Research Council of Canada (NSERC) [RGPIN-2024-05956]. Part of the work was done when Arijit Shaw was a visiting graduate student at the University of Toronto. Computations were performed on the Niagara supercomputer at the SciNet HPC Consortium. SciNet is funded by Innovation, Science and Economic Development Canada; the Digital Research Alliance of Canada; the Ontario Research Fund: Research Excellence; and the University of Toronto.

\bibliographystyle{IEEEtran}

\bibliography{projsmt}

@ieeetranbstctl{IEEEexample:BSTcontrol,
  ctluse_forced_etal       = {yes},
  ctlmax_names_forced_etal = {6},
  ctlnames_show_etal       = {1},
  ctluse_url               = {no},
  ctldash_repeated_names   = {no}
}

@article{T19,
  title   = {High speed hashing for integers and strings},
  author  = {Thorup, Mikkel},
  journal = {arXiv preprint arXiv:1504.06804},
  year    = {2015}
}

@inproceedings{D96,
  title     = {Universal hashing and k-wise independent random variables via integer arithmetic without primes},
  author    = {Dietzfelbinger, Martin},
  booktitle = {Proc. of STACS},
  year      = {1996}
}

@inproceedings{HJ20,
  title     = {solc-verify: A modular verifier for solidity smart contracts},
  author    = {Hajdu, {\'A}kos and Jovanovi{\'c}, Dejan},
  booktitle = {Proc. of VSTTE},
  year      = {2020}
}

@inproceedings{MMBD+18,
  title     = {Cosa: Integrated verification for agile hardware design},
  author    = {Mattarei, Cristian and Mann, Makai and Barrett, Clark and Daly, Ross G and Huff, Dillon and Hanrahan, Pat},
  booktitle = {Proc. of FMCAD},
  year      = {2018}
}

@inproceedings{BBBB+20,
  title     = {Stratified abstraction of access control policies},
  author    = {Backes, John and Berrueco, Ulises and Bray, Tyler and Brim, Daniel and Cook, Byron and Gacek, Andrew and Jhala, Ranjit and Luckow, Kasper and McLaughlin, Sean and Menon, Madhav and others},
  booktitle = {Proc. of  CAV},
  year      = {2020}
}

@article{SSA16,
  title   = {Stochastic program optimization},
  author  = {Schkufza, Eric and Sharma, Rahul and Aiken, Alex},
  journal = {Communications of the ACM},
  volume  = {59},
  number  = {2},
  year    = {2016}
}

@article{CMZ20,
  title   = {Planning for hybrid systems via satisfiability modulo theories},
  author  = {Cashmore, Michael and Magazzeni, Daniele and Zehtabi, Parisa},
  journal = {Journal of Artificial Intelligence Research},
  volume  = {67},
  year    = {2020}
}

@inproceedings{SM24,
  title     = {{CSB: A Counting and Sampling Tool for Bitvectors}},
  author    = {Shaw, Arijit and Meel, Kuldeep S},
  booktitle = {Proc. of SMT Workshop at CAV},
  year      = {2024}
}

@inproceedings{CW77,
  title     = {Universal classes of hash functions},
  author    = {Carter, J Lawrence and Wegman, Mark N},
  booktitle = {Proc. of STOC},
  year      = {1977}
}

@inproceedings{PM14,
  title     = {Abstract model counting: a novel approach for quantification of information leaks},
  author    = {Phan, Quoc-Sang and Malacaria, Pasquale},
  booktitle = {Proc. of ASIACCS},
  year      = {2014}
}

@inproceedings{cvc5,
  title     = {cvc5: A versatile and industrial-strength SMT solver},
  author    = {Barbosa, Haniel and Barrett, Clark and Brain, Martin and Kremer, Gereon and Lachnitt, Hanna and Mann, Makai and Mohamed, Abdalrhman and Mohamed, Mudathir and Niemetz, Aina and N{\"o}tzli, Andres and others},
  booktitle = {Proc. of TACAS},
  year      = {2022}
}

@inproceedings{BPdB15,
  title     = {Probabilistic inference in hybrid domains by weighted model integration},
  author    = {Belle, Vaishak and Passerini, Andrea and Van den Broeck, Guy},
  booktitle = {Proc. of IJCAI},
  year      = {2015}
}

@inproceedings{BdBP15,
  title     = {Hashing-based approximate probabilistic inference in hybrid domains},
  author    = {Belle, Vaishak and Van den Broeck, Guy and Passerini, Andrea},
  booktitle = {Proc. of UAI},
  year      = {2015}
}

@inproceedings{MPS17,
  title     = {Efficient weighted model integration via SMT-based predicate abstraction},
  author    = {Morettin, Paolo and Passerini, Andrea and Sebastiani, Roberto},
  booktitle = {Proc. of AAAI},
  year      = {2017}
}

@inproceedings{KMSB+18,
  title     = {Efficient symbolic integration for probabilistic inference},
  author    = {Kolb, Samuel and Mladenov, Martin and Sanner, Scott and Belle, Vaishak and Kersting, Kristian},
  booktitle = {Proc. of IJCAI},
  year      = {2018}
}

@article{MPS19,
  title   = {Advanced SMT techniques for weighted model integration},
  author  = {Morettin, Paolo and Passerini, Andrea and Sebastiani, Roberto},
  journal = {Artificial Intelligence},
  year    = {2019}
}

@inproceedings{AHT18,
  title     = {Fast Sampling of Perfectly Uniform Satisfying
               Assignments},
  author    = {Achlioptas, Dimitris and Hammoudeh, Z. and Theodoropoulos,
               P.},
  booktitle = {Proc. of SAT},
  year      = {2018}
}

@inproceedings{boolector,
  title     = {{Boolector: An efficient SMT solver for bit-vectors and
               arrays}},
  author    = {Brummayer, Robert and Biere, Armin},
  booktitle = {Proc. of TACAS},
  year      = {2009}
}

@inproceedings{BSSM+19,
  title     = {Quantitative verification of neural networks and its
               security applications},
  author    = {Baluta, Teodora and Shen, Shiqi and Shinde, Shweta and
               Meel, Kuldeep S and Saxena, Prateek},
  booktitle = {Proc. of CCS},
  year      = {2019}
}

@incollection{BSST21,
  title     = {Satisfiability modulo theories},
  author    = {Barrett, Clark and Sebastiani, Roberto and Seshia, Sanjit
               A and Tinelli, Cesare},
  booktitle = {Handbook of satisfiability},
  year      = {2021},
  publisher = {IOS Press}
}

@inproceedings{mathsat5,
  title     = {{The mathsat5 SMT solver}},
  author    = {Cimatti, Alessandro and Griggio, Alberto and Schaafsma, Bastiaan Joost and Sebastiani, Roberto},
  booktitle = {Proc. of TACAS},
  year      = {2013}
}

@inproceedings{bitwuzla,
  title     = {Bitwuzla},
  author    = {Niemetz, Aina and Preiner, Mathias},
  booktitle = {Proc. of CAV},
  year      = {2023}
}

@article{CD08,
  title   = {On probabilistic inference by weighted model counting},
  author  = {Chavira, Mark and Darwiche, Adnan},
  journal = {Artificial Intelligence},
  volume  = {172},
  year    = {2008}
}

@article{CDM17,
  title={Approximate counting in SMT and value estimation for probabilistic programs},
  author={Chistikov, Dmitry and Dimitrova, Rayna and Majumdar, Rupak},
  journal={Acta Informatica},
  volume={54},
  number={8},
  pages={729--764},
  year={2017},
  publisher={Springer}
}

@inproceedings{CMMV16,
  title     = {Approximate probabilistic inference via word-level
               counting},
  author    = {Chakraborty, Supratik and Meel, Kuldeep and Mistry, Rakesh
               and Vardi, Moshe},
  booktitle = {Proc. of AAAI},
  year      = {2016}
}

@inproceedings{CMV13,
  title     = {A scalable approximate model counter},
  author    = {Chakraborty, Supratik and Meel, Kuldeep S and Vardi, Moshe
               Y},
  booktitle = {Proc. of CP},
  year      = {2013}
}

@inproceedings{CMV16,
  title     = {{Algorithmic Improvements in Approximate Counting for
               Probabilistic Inference: From Linear to Logarithmic SAT
               Calls.}},
  author    = {Chakraborty, Supratik and Meel, Kuldeep S and Vardi, Moshe
               Y},
  booktitle = {Proc. of IJCAI},
  year      = {2016}
}

@incollection{CMV21,
  title     = {Approximate model counting},
  author    = {Chakraborty, Supratik and Meel, Kuldeep S and Vardi, Moshe
               Y},
  booktitle = {Handbook of Satisfiability},
  year      = {2021},
  publisher = {IOS Press}
}

@inproceedings{DMPV17,
  title     = {Counting-based reliability estimation for
               power-transmission grids},
  author    = {Duenas-Osorio, Leonardo and Meel, Kuldeep and Paredes,
               Roger and Vardi, Moshe},
  booktitle = {Proc. of AAAI},
  year      = {2017}
}

@article{FHIJ21,
  title     = {SAT competition 2020},
  author    = {Froleyks, Nils and Heule, Marijn and Iser, Markus and
               J{\"a}rvisalo, Matti and Suda, Martin},
  journal   = {Artificial Intelligence},
  volume    = {301},
  year      = {2021},
  publisher = {Elsevier}
}

@inproceedings{GB21,
  title     = {Decomposition Strategies to Count Integer Solutions over
               Linear Constraints.},
  author    = {Ge, Cunjing and Biere, Armin},
  booktitle = {Proc. of IJCAI},
  year      = {2021}
}

@inproceedings{GMLT18,
  title     = {A New Probabilistic Algorithm for Approximate Model
               Counting},
  author    = {Ge, Cunjing and Ma, Feifei and Liu, Tian and Zhang, Jian
               and Ma, Xutong},
  booktitle = {Proc. of IJCAR},
  year      = {2018}
}

@inproceedings{GMMZ+19,
  title     = {Approximating integer solution counting via space
               quantification for linear constraints},
  author    = {Ge, Cunjing and Ma, Feifei and Ma, Xutong and Zhang, Fan
               and Huang, Pei and Zhang, Jian},
  booktitle = {Proc. of IJCAI},
  year      = {2019}
}

@inproceedings{GMZ18,
  author    = {Cunjing Ge and Feifei Ma and Jian Zhang},
  title     = {{VolCE: An Efficient Tool for Solving {\#}SMT(LA)
               Problems}},
  booktitle = {Proc. of PRUV Workshop at IJCAR},
  year      = {2018}
}

@inproceedings{G24,
  title     = {Approximate Integer Solution Counts over Linear Arithmetic Constraints},
  author    = {Ge, Cunjing},
  booktitle = {Proc. of AAAI},
  year      = {2024}
}

@inproceedings{ABB15,
  title     = {Automata-based model counting for string constraints},
  author    = {Aydin, Abdulbaki and Bang, Lucas and Bultan, Tevfik},
  booktitle = {Proc. of CAV},
  year      = {2015}
}

@incollection{GSS21,
  title     = {Model counting},
  author    = {Gomes, Carla P and Sabharwal, Ashish and Selman, Bart},
  booktitle = {Handbook of satisfiability},
  year      = {2021},
  publisher = {IOS press}
}

@inproceedings{KM18,
  title     = {Bit-vector model counting using statistical estimation},
  author    = {Kim, Seonmo and McCamant, Stephen},
  booktitle = {Proc. of TACAS},
  year      = {2018}
}

@book{KS16,
  title     = {Decision procedures},
  author    = {Kroening, Daniel and Strichman, Ofer},
  year      = {2016},
  publisher = {Springer}
}

@inproceedings{SM19,
  title     = {{BIRD: engineering an efficient CNF-XOR SAT solver and its
               applications to approximate model counting}},
  author    = {Soos, Mate and Meel, Kuldeep S},
  booktitle = {Proc. of AAAI},
  year      = {2019}
}

@inproceedings{S83,
  title     = {The complexity of approximate counting},
  author    = {Stockmeyer, Larry},
  booktitle = {Proc. of STOC},
  year      = {1983}
}

@inproceedings{GSS06,
  title     = {Model counting: A new strategy for obtaining good bounds},
  author    = {Gomes, Carla P and Sabharwal, Ashish and Selman, Bart},
  booktitle = {Proc. of AAAI},
  year      = {2006}
}

@inproceedings{AT17,
  title     = {Probabilistic model counting with short XORs},
  author    = {Achlioptas, Dimitris and Theodoropoulos, Panos},
  booktitle = {Proc. of SAT},
  year      = {2017}
}

@inproceedings{ZCSE16,
  title     = {Closing the gap between short and long xors for model counting},
  author    = {Zhao, Shengjia and Chaturapruek, Sorathan and Sabharwal, Ashish and Ermon, Stefano},
  booktitle = {Proc. of AAAI},
  year      = {2016}
}

@article{KDM+23,
  title   = {{CAD Support for Security and Robustness Analysis of Safety-critical Automotive Software}},
  author  = {Koley, Ipsita and Dey, Soumyajit and Mukhopadhyay, Debdeep and Singh, Sachin and Lokesh, Lavanya and Ghotgalkar, Shantaram Vishwanath},
  journal = {ACM Transactions on Cyber-Physical Systems},
  year    = {2023}
}

@inproceedings{TW21,
  author    = {Teuber, Samuel and Weigl, Alexander},
  title     = {Quantifying Software Reliability via Model-Counting},
  booktitle = {Proc. of QEST},
  year      = {2021}
}

@inproceedings{YM23,
  author    = {Yang, Jiong and Meel, Kuldeep S},
  booktitle = {Proc. of CAV},
  title     = {{Rounding Meets Approximate Model Counting}},
  year      = {2023}
}

\end{document}